\renewcommand{\email}[1]{\href{mailto:#1}{\footnotesize\tt #1}}
\begin{document}

\newcommand{\SN}{\mbox{\sf SN}}
\newcommand{\real}{\mbox{\bf R}_{\geq 0}}
\newcommand{\ar}{\mbox{\sf ar}}
\newcommand{\argu}{\mbox{\sf arg}}
\newcommand{\ovf}{\mbox{\sf overflow}}
\newcommand{\rt}{\mbox{\sf root}}
\newcommand{\XX}{{\cal X}}
\newcommand{\desda}{\; \Longleftrightarrow \;}
\newcommand{\gehz}{\gtrsim}
\newcommand{\nt}{\mbox{\sf not}}
\newcommand{\ms}{\mbox{\sf morse}}
\newcommand{\inv}{\mbox{\sf inv}}
\newcommand{\tl}{\mbox{\sf tail}}
\newcommand{\tlp}{\mbox{\sf tail0}}
\newcommand{\hd}{\mbox{\sf head}}
\newcommand{\obs}{\mbox{\sf Obs}}
\newcommand{\pob}{\mbox{\sf P}}
\newcommand{\zip}{\mbox{\sf zip}}
\newcommand{\even}{\mbox{\sf even}}
\newcommand{\odd}{\mbox{\sf odd}}
\newcommand{\fib}{\mbox{\sf Fib}}
\newcommand{\ones}{\mbox{\sf ones}}
\newcommand{\zeros}{\mbox{\sf zeros}}
\newcommand{\nat}{\mbox{\bf N}}
\newcommand{\nf}{\mbox{\bf NF}}
\newcommand{\str}{D^{\omega}}
\newcommand{\TT}{{\cal T}_s}
\newcommand{\EE}{{\cal E}}

\newcommand{\outto}
{\mathrel{\smash{\overset{\raisebox{-0.25ex}{\scriptsize o}}{\to}}}}
\newcommand{\noutto}
{\mathrel{\smash{\overset{\raisebox{-.025ex}{\scriptsize no}}{\to}}}}
\newcommand{\nrootto}
{\mathrel{\smash{\overset{\raisebox{-.025ex}{\scriptsize ${>}\epsilon$}}{\to}}}}
\newcommand{\rootto}
{\mathrel{\smash{\overset{\raisebox{-.025ex}{\scriptsize $\epsilon$}}{\to}}}}
\newcommand{\parto}{\overset{\parallel}{\to}}
\newcommand{\noutparto}
    {\mathrel{\smash{\xrightarrow{\raisebox{-.025ex}%
    {\scriptsize $\shortparallel~\text{no}$}}}}}
\newcommand{\Pos}{\mathrm{Pos}}
\newcommand{\Pow}{\mathfrak{P}}
\newcommand{\Cons}{\mathcal{C}}
\newcommand{\fsym}[1]{\mathsf{#1}}
\newcommand{\res}{\mathrm{res}}
\newcommand{\subst}{\varsigma}
\newcommand{\substP}{\varrho}
\newcommand{\hole}{\square}
\newcommand{\oldcomment}[1]{}
\newcommand{\comment}[1]{\footnote{!!! #1}}

\newcommand{\wrt}{w.r.t.\ }
\newcommand{\secref}[1]{\hyperref[#1]{Section~\ref*{#1}}}


\theoremstyle{plain}
\newtheorem{theorem}{Theorem}
\newtheorem{lemma}[theorem]{Lemma}
\newtheorem{proposition}[theorem]{Proposition}
\newtheorem*{parallelmoveslemma}{Parallel Moves Lemma\label{lem:PML}}
\newcommand{\thmref}[1]{\hyperref[#1]{Theorem~\ref*{#1}}}
\newcommand{\lemref}[1]{\hyperref[#1]{Lemma~\ref*{#1}}}
\newcommand{\propref}[1]{\hyperref[#1]{Proposition~\ref*{#1}}}
\newcommand{\PMLref}{\hyperref[lem:PML]{Parallel Moves Lemma}}

\theoremstyle{definition}
\newtheorem{definition}[theorem]{Definition}
\newtheorem{example}[theorem]{Example}
\newcommand{\defref}[1]{\hyperref[#1]{Definition~\ref*{#1}}}
\newcommand{\exref}[1]{\hyperref[#1]{Example~\ref*{#1}}}

\title{Stream Productivity by Outermost Termination}
\author{Hans Zantema
\institute{
Department of Computer Science, TU Eindhoven, P.O.\ Box 513,\\
5600 MB Eindhoven, The Netherlands\\
\email{H.Zantema@tue.nl}
\\\\
Institute for Computing and Information Sciences, Radboud University
\\
Nijmegen, P.O.\ Box 9010, 6500 GL Nijmegen, The Netherlands
}
\\\\
\and
Matthias Raffelsieper
\institute{
Department of Computer Science, TU Eindhoven, P.O.\ Box 513,\\
5600 MB Eindhoven, The Netherlands\\
\email{M.Raffelsieper@tue.nl}
}
}

\def\authorrunning{Hans Zantema and Matthias Raffelsieper}
\def\titlerunning{Stream Productivity by Outermost Termination}

\maketitle

\begin{abstract}
Streams are infinite sequences over a given data type. A stream
specification is a set of equations intended to define a stream.
A core property is productivity: unfolding the equations produces
the intended stream in the limit. In this paper we show that
productivity is equivalent to termination with respect to the
balanced outermost strategy of a TRS obtained by adding an
additional rule. For specifications not involving branching symbols
balancedness is obtained for free, by which tools for proving
outermost termination can be used to prove productivity fully
automatically. 
\end{abstract}

\section{Introduction}

Streams are among the simplest data types in which the objects are
infinite: they can be seen as maps from the natural numbers to some 
data type $D$.
The basic constructor for streams is the operator `:' mapping a data element $d$
and a stream $s$ to a new stream $d:s$ by putting $d$ in front of $s$.
Using this operator we can define streams by equations. For instance,
the Thue Morse sequence $\ms$ over the data elements $0,1$ can be
specified by the rules
\[ \begin{array}{rclrcl}
\ms & \to & 0:\zip(\inv(\ms),\tl(\ms)) \hspace{8mm} &
\tl(x:\sigma) & \to & \sigma \\
\inv(x:\sigma) & \to & \nt(x) : \inv(\sigma) &
\zip(x:\sigma,\tau) & \to & x : \zip(\tau,\sigma)  \end{array} \]
together with the two rules $\nt(0) \to 1$ and $\nt(1) \to 0$.

This stream specification is {\em productive}: for every $n \in
\nat$ there is a rewrite sequence $\ms \to^* u_1 : u_2 : \cdots : u_n
: t$, that is, by these rules every $n$-th element of the stream can
be computed.
This notion of productivity goes back to Sijtsma~\cite{S89}.
In \cite{EGH08} a nice and powerful approach has been described to prove
productivity automatically for a restricted class of stream
specifications. Here we follow a completely different approach: we do
not have these restrictions, but show that productivity is equivalent
to termination with respect to a particular kind of outermost
rewriting, after adding the rule $x : \sigma \to \ovf$. The intuition
of this equivalence is clear: productivity is 
equivalent to the claim that every ground term rewrites to a term
with ':' on top. This kind of rewriting is forced by doing outermost
rewriting, and as soon as ':' is on top, the reduction to $\ovf$ is
forced, blocking further rewriting. 

However, there are some pitfalls. In the above example the term
$\tl(\ms)$ admits an infinite outermost reduction starting by
\[ \begin{array}{rclrcl}
\tl(\ms) & \to & \tl(0:\zip(\inv(\ms),\tl(\ms))) \\
 & \to & \zip(\inv(\ms),\tl(\ms)) \end{array} \]
and then repeating this reduction forever on the created subterm
$\tl(\ms)$. So the outermost strategy to be considered needs an extra
requirement disallowing this reduction. This requirement is what we
call {\em balanced}: we require every redex in the reduction either
to be reduced eventually, or rewritten by a redex closer to the root.
In the given example the redex $\ms$ in $\zip(\inv(\ms),\cdots)$ is
never reduced, nor rewritten by a higher redex, so the resulting
infinite outermost reduction is not balanced.

Our main result states that a stream specification given by a TRS $R$
is productive for all ground terms if and only if 
$R \cup \{ x : \sigma \to \ovf \}$ does not admit an infinite 
balanced outermost reduction.

For the special case
without rewrite rules for the data and without symbols having more than one
argument of stream type,
balancedness is obtained for free, and
productivity of $R$ on all ground terms is equivalent to outermost
termination of $R \cup \{ x : \sigma \to \ovf \}$. For this fully
automatic tools can be used, for
instance based on the approaches of \cite{EH09,RZ09,Thiemann09}. 

As an example consider
\[ \begin{array}{rcl}
\fsym{c} & = & 1:\fsym{c} \\
\fsym{f}(0:\sigma) & = & \fsym{f}(\sigma) \\
\fsym{f}(1:\sigma) & = & 1:\fsym{f}(\sigma) \\
\end{array} \]
by which we want to compute $\fsym{f}(\fsym{c})$. Clearly $c$ only 
consists of ones, and $f$ only removes zeros, so the result of 
$\fsym{f}(\fsym{c})$ will be the infinite stream of ones. Every 1
in this stream is easily produced by the reduction
\[ \fsym{f}(\fsym{c}) \to \fsym{f}(1:\fsym{c}) \to
1: \fsym{f}(\fsym{c}) \to \cdots, \]
proving productivity of $\fsym{f}(\fsym{c})$. 
However, the approach from \cite{EGH08} fails, as this stream 
specification is not \emph{data-obliviously productive}, i.e., the 
identity of the data is essential for productivity. As far as we
know, and confirmed by the authors of \cite{EGH08}, until now there 
were no techniques for proving productivity automatically if the
productivity is not data-oblivious. This has changed by the
approach we present in this paper. The above example does not directly fit
the basic format of our approach. However, it is easily (and
automatically) unfolded to the system $R$ consisting of the rules
\[ \begin{array}{rcl}
\fsym{c} & = & 1:\fsym{c} \\
\fsym{f}(x:\sigma) & = & \fsym{g}(x,\sigma) \\
\fsym{g}(0,\sigma) & = & \fsym{f}(\sigma) \\
\fsym{g}(1,\sigma) & = & 1:\fsym{f}(\sigma) \\
\end{array} \]
fitting the basic format of our approach.
Now outermost termination of $R \cup \{ x : \sigma \to \ovf \}$
can be proved by a tool. Due to the shape of the symbols and the
fact that there are no rewrite rules for the data, also balanced 
outermost termination of $R \cup \{ x : \sigma \to \ovf \}$ can be
concluded. Then the main theorem of our paper states productivity,
not only for $\fsym{f}(\fsym{c})$ but for all ground terms of sort stream.

The approach works for several other examples, for instance for an 
alternative definition of the $\ms$ stream.

In \cite{Z09} a related approach is described, while an implementation of 
that technique is described in \cite{Z09a}. However, there the result is on 
well-definedness of stream specifications, which is a slightly weaker notion than 
productivity. The main result of  \cite{Z09} is that well-definedness of a stream
specification can be concluded from termination of some transformed system: the
observational variant.

\section{The Main Result}

In stream specifications we have two sorts: $s$ (stream) and $d$
(data). We assume the set $D$ of data elements to consist of the 
unique normal forms of 
ground terms over some signature $\Sigma_d$ with respect to some
terminating orthogonal
rewrite system $R_d$ over $\Sigma_d$. Here all symbols of $\Sigma_d$ 
are of type $d^n \to d$ for some $n \geq 0$. 
In the actual stream specification
we have a set $\Sigma_s$ of stream symbols, each being of type
$ d^n \times s^m \to s$ for $n,m \geq 0$. 
Apart from that, we assume a particular symbol ${:} \not\in \Sigma_s$ 
having type $d \times s \to s$. 
As a notational convention 
variables of sort $d$ will be denoted by
$x,y$, terms of sort $d$ by $u,u_i$, 
variables of sort $s$ by $\sigma,\tau$, and
terms of sort $s$ by $t,t_i$.

\begin{definition}
\label{defss}
A {\em stream
specification} $(\Sigma_d,\Sigma_s,R_d,R_s)$ consists of  
$\Sigma_d,\Sigma_s,R_d$ as given before, and a set $R_s$ of rewrite
rules over $\Sigma_d \cup \Sigma_s \cup \{ : \}$ of the shape
\[ f(u_1,\ldots,u_n,t_1, \dots, t_m) \to t, \]
where 
\begin{itemize}
\item $f \in \Sigma_s$ is of type $ d^n \times s^m \to s$,
\item for every $i = 1,\ldots,m$ the term $t_i$ is either a variable 
of sort $s$, or $t_i = x : \sigma$ where $x$ is a variable of sort
$d$ and $\sigma$ is a variable of sort $s$,
\item $t$ is any well-sorted term of sort $s$,
\item $R_s \cup R_d$ is orthogonal,
\item Every term of the shape
$f(u_1,\ldots,u_n,u_{n+1}:t_1,\ldots,u_{n+m}:t_m)$
for $f \in \Sigma_s$ of type $ d^n \times s^m \to s$, and
$u_1,\ldots,u_{n+m} \in D$ matches with the left hand side of a rule from
$R_s$.
\end{itemize}
\end{definition}

Sometimes we call $R_s$ a stream specification: in that case $\Sigma_d$, 
$\Sigma_s$ consist of the symbols of sort $d$, $s$, respectively, occurring in
$R_s$, and $R_d = \emptyset$. Rules $\ell \to r$ in $R_s$ are often written as
$\ell = r$.

\defref{defss} is nearly the same as in \cite{Z09}. It is closely
related to the definition of stream specification in \cite{EGH08}: 
by introducing fresh symbols and rules for defining these fresh symbols,
every stream specification in the format of \cite{EGH08} can be
unfolded to a stream specification in our format. 
In the end of the introduction, where we unfolded 
$\fsym{f}(x:\sigma)$ to $\fsym{g}(x,\sigma)$, we already saw an example of this.

For defining productivity we follow the definition from \cite{EGH08}: a stream
specification is called productive for a ground term $t$ if for every $n \in \nat$ 
there exists a reduction of the shape $t \to^* u_1 : u_2 : \cdots : u_n : t'$.
Instead of fixing the start ground term $t$ we prefer to require 
this for all ground terms of sort $s$. In practice this will make hardly any difference: 
typically a stream specification consists of an intended stream to be defined 
and a few auxiliary functions for which productivity not only holds for the 
single stream to be defined but also for any ground term built from 
it and the auxiliary functions.

Taking all ground terms of sort $s$ instead of only one has a strong advantage: then 
for proving productivity it is sufficient to prove that the first element is produced, 
rather than all elements. This is expressed in the following proposition that will
serve as our characterization of productivity:
\begin{proposition}
\label{prop:Prod}
A stream specification $(\Sigma_d,\Sigma_s,R_d,R_s)$ is productive
for all ground terms of sort $s$ if and only if every ground term $t$ of
sort $s$ admits a reduction $t \to_{R_s \cup R_d}^* u' : t'$.
\end{proposition}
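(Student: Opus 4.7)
The equivalence is essentially a bootstrapping argument, so the plan is to reduce the quantification over all $n$ to the single case $n = 1$, and then iterate.

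The $(\Rightarrow)$ direction is immediate from the definition: take $n=1$ in the productivity condition to obtain a reduction $t \to_{R_s \cup R_d}^* u' : t'$.

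For the $(\Leftarrow)$ direction I would proceed by induction on $n$. Fix a ground term $t$ of sort $s$. The base case $n = 0$ is trivial (the empty reduction witnesses $t \to^* t$, which we may view as producing zero outer cons symbols). For the inductive step, assume we already have a reduction
\[
t \;\to_{R_s \cup R_d}^*\; u_1 : u_2 : \cdots : u_n : t',
\]
where $t'$ is some term. The key observation is that $t'$ is again a \emph{ground term of sort $s$}: the starting term is ground, no rule in $R_s \cup R_d$ introduces fresh variables on right-hand sides (rewriting preserves groundness), and $t'$ sits in a position of sort $s$ by the typing of $\mathord{:}$. Hence the hypothesis applies to $t'$ and yields $t' \to^* u_{n+1} : t''$. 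Since rewriting is closed under contexts, lifting this reduction into the position under the $n$ outer cons symbols gives $t \to^* u_1 : \cdots : u_n : u_{n+1} : t''$, completing the induction.

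If the official definition of productivity insists that each $u_i$ lies in $D$ (i.e.\ is an $R_d$-normal form of sort $d$), I would at the end reduce each $u_i$ to its unique normal form using termination and orthogonality of $R_d$; these reductions take place in disjoint positions, so they can be appended to the witness. The main conceptual point, and really the only thing that needs care, is the sort/groundness bookkeeping that justifies reapplying the hypothesis to $t'$; the induction itself is routine. I do not foresee a real obstacle, which is consistent with the authors' framing of this proposition as a convenient characterization rather than a deep result.
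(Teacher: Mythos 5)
Your proof is correct and follows essentially the same route as the paper: the forward direction by taking $n=1$, and the converse by induction on $n$, applying the hypothesis to the ground tail term $t'$ and lifting the reduction under the outer cons symbols by closure under contexts. The extra remark about normalizing the $u_i$ in $R_d$ is unnecessary for the paper's definition of productivity but does no harm.
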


\begin{proof}
The ``only if'' direction of the proposition is obvious. To show the ``if''
direction, we show that if for all ground terms of sort $s$ we have
$t \to_{R_s \cup R_d}^* u' : t'$, then
$t \to_{R_s \cup R_d}^* u_1 : u_2 : \cdots : u_n : t_n$
for all $n \in \nat$. This is done by induction on $n$.

If $n=0$, then the proposition directly holds.

Otherwise, we get from the induction hypothesis that
$t \to_{R_s \cup R_d}^* u_1 : u_2 : \cdots : u_{n-1} : t_{n-1}$.
Since $t_{n-1}$ is also a ground term of sort $s$, we have
$t_{n-1} \to_{R_s \cup R_d}^* u' : t'$ by assumption.
Hence,
$t \to_{R_s \cup R_d}^* u_1 : u_2 : \cdots : u_{n-1} : t_{n-1}
    \to_{R_s \cup R_d}^* u_1 : u_2 : \cdots : u_{n-1} : u' : t'$,
proving the proposition.
\end{proof}

From now on we omit the subscript $R_s \cup R_d$ in rewrite steps
$\to$.
Given a term $t$, we define the set of positions $\Pos(t) \subseteq \nat^*$
as the smallest set such that $\epsilon \in \Pos(t)$
and if $t = f(t_1, \dotsc, t_n)$, then
$i.p' \in \Pos(t)$ for all $1 \le i \le n$ and $p' \in \Pos(t_i)$.
The replacement of the subterm of $t$ at some position $p$, denoted $t|_p$,
by another term $t'$ is denoted $t[t']_p$ and defined by
$t[t']_{\epsilon} = t'$
and
$f(t_1, \dotsc, t_n)[t']_{i.p'} = f(t_1, \dotsc, t_i[t']_{p'}, \dotsc, t_n)$.
A \emph{context} $C$ is a special term, in which the variable $\hole$ occurs
exactly once. Then, we write $C[t]$ to denote the term that is obtained by
replacing $\hole$ with the term $t$.
If in a rewrite step $t \to t'$ the redex is on position $p
\in \Pos(t)$, we write $t \to_p t'$. We also write $t \to_p$ to indicate that
the term $t$ has a redex at position $p$. For two positions $p,q$ we write
$p \leq q$ if $p$ is a prefix of $q$, and $p < q$ if $p$ is a proper
prefix of $q$, that is, the position $p$ is
above $q$.
If neither $p \leq q$ nor $q \leq p$, then we call the two positions
\emph{independent}, which is denoted $p \parallel q$.
A rewrite step $t \to_p t'$ is called 
{\em outermost} if $t$ does not contain a redex in a position $q$
with $q < p$. A reduction is called outermost if every step is
outermost.
Such an infinite outermost reduction is called \emph{balanced outermost},
if every redex is eventually either reduced or consumed by a redex at
a higher position, as formally defined below.

\begin{definition}
\label{def:OutBal}
Let $R$ be an arbitrary TRS.
An infinite outermost reduction 
\[ t_1 \to_{p_1} t_2 \to_{p_2} t_3 \to_{p_3} t_4 \cdots \]
with respect to $R$ is called {\em balanced outermost} if for every $i$ and
every redex of $t_i$ on position $q$ there exists $j \geq i$ such that
$p_j \leq q$.
The TRS $R$ is called \emph{balanced outermost terminating}
if it does not admit an infinite balanced outermost reduction.
\end{definition}

A direct consequence is that for any infinite outermost 
reduction that is not balanced and contains a redex on position $p$ in some 
term, every term later in the reduction has a redex
on position $p$, too.

As an example we consider the stream specification for the Thue Morse 
sequence from the introduction. The infinite reduction
\[ \begin{array}{rclrcl}
\tl(\ms) & \to & \tl(0:\zip(\inv(\ms),\tl(\ms))) \\
& \to & \zip(\inv(\ms),\tl(\ms)) \end{array} \] 
continued by repeating this reduction forever on the created subterm
$\tl(\ms)$, is outermost, but not balanced, since the redex $\ms$ on 
position $1.1$ in the term $\zip(\inv(\ms),\tl(\ms))$ is never rewritten, 
and neither a higher redex. By forcing the infinite outermost reduction 
to be balanced, this redex should be rewritten, after which the rule 
for $\inv$ can be applied, and has to be applied due to balancedness, 
after which the first argument of $\zip$ will have '$:$' as its root, 
after which outermost reduction will choose the $\zip$ rule and create
a '$:$' as the root.

Now we arrive at the main theorem, showing that productivity of a stream
specification is equivalent to balanced outermost termination of the stream
specification extended with the rule $x : \sigma \to \ovf$.

\begin{theorem}
\label{thm:ProdEquivBalOutTerm}
A stream specification $(\Sigma_d,\Sigma_s,R_d,R_s)$ is productive
for all ground terms of sort $s$ if and only if 
\[ R_d \; \cup \; R_s \; \cup \; \{x : \sigma \to \ovf \} \]
is balanced outermost terminating.
\end{theorem}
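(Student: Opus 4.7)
The plan is to prove both directions by contrapositive, using \propref{prop:Prod} to recast productivity as: every ground term of sort $s$ has an $R$-reduct (with $R = R_s \cup R_d$) of the form $u':t'$. Write $R' := R \cup \{x:\sigma \to \ovf\}$. The pivotal property of the added rule is that $\ovf$ is fresh and occurs in no left-hand side, so every subterm rewritten to $\ovf$ is a permanent normal form, and the only redexes contributed by the rule are exactly the $:$-rooted subterms.

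For the ``if'' direction, I take a non-productive ground term $t$ of sort $s$ and construct an infinite balanced outermost $R'$-reduction starting from $t$. A preliminary lemma, proved by induction on size using the last condition of \defref{defss} (base case: stream constants of $\Sigma_s$ without stream arguments are forced to be $R_s$-redexes), states that every ground $R$-normal form of sort $s$ has $:$ at its root; hence $t$ itself has no $R$-normal form. I then build the desired reduction greedily by selecting at each stage an outermost $R'$-redex, cycling through currently outermost positions in round-robin fashion to maintain balancedness. This reduction cannot terminate: reaching $\ovf$ would require an earlier reduction $t \to^*_R u':t'$ (contradicting non-productivity), and by the same preliminary lemma extended to allow $\ovf$-subterms, no other $R'$-normal form is reachable either.

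For the ``only if'' direction, suppose an infinite balanced outermost $R'$-reduction $\rho : t_1 \to_{p_1} t_2 \to_{p_2} \cdots$ exists; I may assume $t_1$ is ground of sort $s$ (the sort-$d$ case reduces to termination of $R_d$, and the non-ground case is handled by a suitable instantiation). If the root of some $t_i$ were ever to be $:$, outermostness would force $p_i = \epsilon$ via the $\ovf$-rule, producing the normal form $\ovf$ and terminating $\rho$; therefore the root of every $t_i$ along $\rho$ stays in $\Sigma_s$, and no step of $\rho$ ever rewrites the root to $:$. By productivity there is an $R$-reduction $t_1 \to^*_R u':t'$, and by orthogonality of $R$ it can be organized so that an eventual root step fires only after the stream-argument subterms have reached the $:$-form required by the firing rule's left-hand side. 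Balancedness of $\rho$ then forces each stream-argument sub-reduction to reach $:$-form at the same positions, at which point the corresponding $f \in \Sigma_s$ at the root becomes an $R$-redex; being at position $\epsilon$ it is mandatory for outermost, contradicting the fact that $\rho$ never rewrites the root to $:$.

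The main obstacle is the ``only if'' direction: formalizing that balancedness of $\rho$ truly propagates to the stream-argument sub-reductions and forces them to produce $:$. A natural route is well-founded induction on the shortest length of an $R$-witness $t \to^*_R u':t'$ for the ambient ground term of sort $s$, combined with orthogonality of $R$ (which guarantees that redexes below a position are not destroyed by reductions strictly above), so that the balancedness bookkeeping in each stream-argument subposition mirrors that of a strictly smaller productivity instance.
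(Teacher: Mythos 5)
There are genuine gaps in both directions. In your ``if'' direction, the construction is not guaranteed to yield an infinite reduction: your round-robin selects arbitrary outermost redexes of $R' = R_d \cup R_s \cup \{x:\sigma \to \ovf\}$, so it may apply the overflow rule to a $:$-rooted \emph{proper} subterm, and the claimed extension of the preliminary lemma to terms containing $\ovf$ is false. The exhaustiveness condition of \defref{defss} only guarantees a root redex when the stream arguments are $:$-rooted and the data arguments are in $D$; once an argument has been collapsed to $\ovf$, the term can be an $R'$-normal form. Concretely, with $R_d = \{\nt(0)\to 1,\ \nt(1)\to 0\}$ and rules $\fsym{c} \to \fsym{f}(\nt(1), 0:\fsym{c})$, $\fsym{f}(0,x:\sigma)\to \fsym{f}(0,\sigma)$, $\fsym{f}(1,x:\sigma)\to \fsym{f}(1,\sigma)$ (a legal, non-productive specification), your strategy may take $\fsym{c} \to \fsym{f}(\nt(1),0:\fsym{c}) \to \fsym{f}(\nt(1),\ovf) \to \fsym{f}(0,\ovf)$, which is an $R'$-normal form, and the construction halts. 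The paper avoids exactly this by building the infinite reduction from redexes of $R_d\cup R_s$ that lie at positions not below any ``$:$'' (\lemref{lem:RedNotBelowCons}), so the overflow rule is never used and $\ovf$ never appears; you would need the same restriction (plus a more careful fairness/interleaving argument, as in the paper's soundness proof) to make your construction go through.

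Your ``only if'' direction is missing its core. The assertion that ``balancedness of $\rho$ forces each stream-argument sub-reduction to reach $:$-form at the same positions'' is precisely the theorem restated at the argument positions, not a consequence of the definition of balancedness, and your proposed induction measure (shortest productivity witness of the ambient term) is not shown to decrease for those sub-instances; moreover, even with $:$-rooted stream arguments the root need not be an $R$-redex, since left-hand sides may have non-variable data patterns ($\fsym{g}(0,\sigma)$ vs.\ $\fsym{g}(1,\sigma)$), so unevaluated data arguments can still block the root. The paper's completeness proof supplies the machinery you are gesturing at: it first strips the overflow steps, uses \lemref{lem:OutermostProd} (via \lemref{lem:SplitOandNO} and \lemref{lem:SwapNOwithO}, which rest on the Parallel Moves Lemma and orthogonality) to obtain an \emph{outermost} productive reduction $t \outto^* u:t_u$, and then, by induction on its length, pushes the infinite balanced outermost reduction one step along it using the two case lemmas \lemref{lem:BalOutStaysBalOutOnEqPos} ($p_j = q_1$) and \lemref{lem:ProdBelowBalOut} ($p_j < q_1$), each of which requires a delicate argument that balancedness and the non-$:$-root property are preserved under the reordering. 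Without analogues of these commutation/rearrangement lemmas, your sketch does not constitute a proof of this direction.
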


\section{Soundness}
\label{sec:Soundness}

In this section we show soundness of \thmref{thm:ProdEquivBalOutTerm}, i.e.,
balanced outermost termination of the extended TRS implies productivity of the
corresponding stream specification.

For doing so, using the special shape of stream specifications, first we prove 
a lemma stating that any ground
term not having '$:$' as root symbol contains a redex that is not below a '$:$'
symbol. 

\begin{lemma}
\label{lem:RedNotBelowCons}
Let $(\Sigma_d,\Sigma_s,R_d,R_s)$ be a stream specification, and let $t$ be a
ground term of sort $s$ with $\rt(t) \ne {:}$. Then there exists a position $p
\in \Pos(t)$ such that $t \to_p$ and for all $p' < p$, $\rt(t|_{p'}) \ne {:}$.
\end{lemma}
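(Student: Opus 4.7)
The plan is to prove the lemma by structural induction on $t$, exploiting the restricted shape of left-hand sides in stream specifications. Since $t$ has sort $s$ and $\rt(t) \neq {:}$, we can write $t = f(u_1,\ldots,u_n,t_1,\ldots,t_m)$ for some $f \in \Sigma_s$ of type $d^n \times s^m \to s$, with ground sort-$d$ arguments $u_i$ and ground sort-$s$ arguments $t_j$. I would then distinguish three cases according to where the first obstruction to matching $t$ against a rule in $R_s$ lies.

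If some $u_i$ with $1 \le i \le n$ is not in $R_d$-normal form, then $u_i$ contains a redex at some position $q$ (using termination of $R_d$), and $p = i.q$ works: every proper ancestor of $p$ in $t$ is either $\epsilon$ (with root $f \in \Sigma_s$) or a position inside the ground sort-$d$ term $u_i$, whose root must lie in $\Sigma_d$; in neither case is the root ${:}$. Otherwise, if some $t_j$ still has $\rt(t_j) \ne {:}$, applying the inductive hypothesis to $t_j$ yields a position $q$ with no ${:}$ strictly above it, and $p = (n+j).q$ transfers the property to $t$.

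The remaining case is when all $u_i \in D$ and every $t_j$ has root ${:}$, say $t_j = v_j{:}s_j$; here the goal is to show that $t$ itself is a redex at $\epsilon$. The main obstacle is that the last bullet of \defref{defss} only guarantees matching when additionally all $v_j$ are also in $D$, which need not hold. I plan to overcome this by normalizing each $v_j$ to some $v_j^* \in D$ via $R_d$, applying the matching condition to $t^* = f(u_1,\ldots,u_n,v_1^*{:}s_1,\ldots,v_m^*{:}s_m)$ to obtain a rule $\ell \to r$ whose left-hand side matches $t^*$, and then transferring the match back to $t$. The data patterns in argument positions $1,\ldots,n$ of $\ell$ match the unchanged $u_i$; each stream argument position of $\ell$ contains either a variable (which matches $v_j{:}s_j$ directly) or a pattern $x_j{:}\sigma_j$ whose variables can simply be reassigned to $v_j$ and $s_j$. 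Left-linearity of $\ell$, which follows from orthogonality of $R_s \cup R_d$, ensures these reassignments combine into a consistent substitution, so $\ell$ matches $t$ and $p = \epsilon$ is a valid choice. Everything else reduces to bookkeeping on positions and straightforward appeals to termination of $R_d$ or to the inductive hypothesis.
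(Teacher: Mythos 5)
Your proof is correct and follows essentially the same structural induction and exhaustive case split (data argument not in normal form; some stream argument with root other than ${:}$; otherwise a root redex via the matching condition) as the paper's own proof. The only difference is that in the final case you explicitly bridge the gap between the matching requirement of \defref{defss} (stated only for stream-argument heads in $D$) and the actual heads $v_j$, by normalizing them and transferring the match back using left-linearity and the restricted shape of the stream patterns --- a detail the paper's proof simply asserts without comment.
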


\begin{proof}
This lemma is proven by structural induction on $t$.

If $t$ is a constant $c \in \Sigma_s$, then by requirement there is a
rule $c \to r \in R_s$ for some term $r$.

Otherwise, $t = f(u_1, \dotsc, u_m, t_1, \dotsc, t_n)$ for some symbol
$f \ne {:}$, ground terms $u_1, \dotsc, u_m$ of sort $d$, and ground
terms $t_1, \dotsc, t_n$ of sort $s$. If $t \to_{\epsilon}$, then the lemma
holds. Therefore, we assume in the rest of the proof that this is not the case.

If there is a $u_i$ such that $u_i \to$, then this reduction is not below a
'$:$' since $f \ne {:}$.

Otherwise, assume that $u_i \in \nf(R_d)$ for all $1 \le i \le m$. If there is a
term $t_j$ with $\rt(t_j) \ne {:}$, then we get from the induction hypothesis
that $t_j \to_p$ for some position $p$ that is not below a '$:$'. Hence, the
position $(m+j).p$ is also not below a '$:$', since $f \ne {:}$. Finally,
we have to consider the case where $u_i \in \nf(R_d)$ and $t_j = u_j : t_j'$ for
all $1 \le j \le n$ and some terms $u_j, t_j'$. However, in this case it is
required by stream specifications that $t \to_{\epsilon}$, giving a
contradiction to our assumption.
\end{proof}

Using the above lemma, we can now prove soundness of our main result, i.e.,
we can show a stream specification $(\Sigma_d, \Sigma_s, R_d, R_s)$ to be
productive by showing $R_d \cup R_s \cup \{ x : \sigma \to \ovf \}$ to be
balanced outermost terminating.

\begin{proof}[Proof of Soundness of \thmref{thm:ProdEquivBalOutTerm}]
Assume $t$ is not productive, i.e., it does not rewrite
to a term with '$:$' as its root symbol. This allows us to construct an infinite
balanced outermost reduction \wrt
$R_d \cup R_s \cup \{ x : \sigma \to \ovf \}$:
According to \lemref{lem:RedNotBelowCons}, there exists a position $p$ such that
$t \to_p$ and for all $p' < p$, $\rt(t|_{p'}) \ne {:}$. Hence, there exists a
position $q_1 \le p$ such that for some term $t_1$, $t \to_{q_1} t_1$ is an
outermost step \wrt $R_d \cup R_s$. Since also for all $q' < q_1$,
$\rt(t|_{q'}) \ne {:}$, this is also an outermost step \wrt $R_d \cup R_s \cup
\{x:\sigma \to \ovf\}$.
Also $t_1$ is not productive, otherwise, if $t_1$ would rewrite to a term with
'$:$' as its root symbol, then so would $t$. Hence, we can repeat this argument
to obtain an infinite outermost reduction $t = t_0 \to_{q_1} t_1 \to_{q_2} t_2
\to_{q_3} \dots$.

There might however be a term $t_i$ and a redex on a position $p \in \Pos(t_i)$
that is never reduced or consumed in the constructed infinite outermost
reduction. However, then there is never a reduction step above $p$ in the
remaining reduction, i.e., for all $j > i$, $q_j \not\le p$.
Since the reduction consists of outermost steps, we furthermore can conclude
that $q_j \not> p$, otherwise $t_{j-1} \to_{q_j} t_j$ would not be outermost.
Hence, $q_j \parallel p$ for all $j > i$. Let $p' \le p$ such that $t_i
\to_{p'}$ is an outermost step. Then also $p' \parallel q_j$ for all $j > i$,
since $q_j \le p' \le p$ would contradict the assumption that $q_j \not\le p$
and $q_j > p'$ would contradict the assumption that $t_{j-1} \to_{q_j} t_j$ is
an outermost step. Therefore, we can reduce the redex at position $p'$ at any
time, without affecting reducibility of the redexes at positions $q_j$. These
however might now become non-outermost steps. So let $t_0 \to^* t_i
\to_{q_{i+1}} \dots \to_{q_k} t_k \to_{p'} t_{k+1}'$ for some $k > i$ such that
$t_{k+1}' \to_{q_{k+1}}$ is not an outermost step. But then we can again apply
the above reasoning that there is a redex on a position not below a '$:$' symbol
in $t_{k+1}'$ and following terms, yielding another infinite outermost
reduction for which the redex of $t_i$ at position $p$ is reduced or consumed.
Repeating this construction gives an infinite balanced outermost reduction,
which shows soundness of the theorem.
\end{proof}

\section{Completeness}
\label{sec:Completeness}

In this section we show completeness of \thmref{thm:ProdEquivBalOutTerm}, i.e.,
disproving balanced outermost termination allows us to conclude
non-productivity.
Before we can prove this however, we first have to introduce some notation that
allows us to distinguish between outermost and non-outermost rewrite steps.

\begin{definition}
\label{def:NonOutermost}

For a TRS $R$, we define $t \outto_p t'$ if $t \to_p t'$ is an outermost
rewrite step.
Otherwise, if $t \to_p t'$ is not an outermost rewrite step, we define
$t \noutto_p t'$.
\end{definition}

By convention, we will denote substitutions with $\subst, \substP$, which are
mappings from variables to terms, written as $\{ x_1 := t_1, \dotsc, x_n := t_n
\}$. Application of a substitution $\subst$ to a term $t$ is denoted $t\subst$.
Given a TRS $R$, $c$ is called a \emph{constructor} if $\rt(\ell) \ne c$
for all rules $\ell \to r \in R$. 
Furthermore, given a term $t$, the \emph{tail} of a position
$p \in \Pos(t)$ \wrt another position $p' \in \Pos(t)$ with $p' \le p$ is
denoted $p \smallsetminus p'$ and defined as $p \smallsetminus \epsilon = p$
and $i.p \smallsetminus i.p' = p \smallsetminus p'$.
Thereby, $p \smallsetminus p'$ is $p$ after removing the prefix $p'$.
Finally, we define the concept of parallel rewrite steps.

\begin{definition}
\label{def:Parallel}

For a TRS $R$ we define the \emph{parallel rewrite step} $t \parto t'$ if there
exists a set of positions $\{ p_1, \dotsc, p_n \} \subseteq \Pos(t)$
such that for all $1 \le i,j \le n$ with $i \ne j$, $p_i \parallel p_j$ and
$t \to_{p_1} t_1 \to_{p_2} \dotsb \to_{p_n} t'$.
\end{definition}

A standard lemma that we will use is the \emph{Parallel Moves Lemma},
which is for example presented and proved in \cite[Lemma~6.4.4]{BaaderNipkow98}.
We will however use a slightly different form than presented there, but the
proof of~\cite{BaaderNipkow98} easily shows this to be true.

\begin{parallelmoveslemma}
Let $R$ be a TRS and $\ell \to r \in R$ a left-linear rule. If for two
substitutions $\subst, \subst'$ we have that $x\subst \parto x\subst'$ for
all variables $x$, then $\ell\subst \parto \ell\subst' \to r\subst'$ and
$\ell\subst \to r\subst \parto r\subst'$.
\end{parallelmoveslemma}

It is easy to see that for an orthogonal TRS, the \PMLref{} is always applicable
in case a term is reducible at two different positions. This holds,
since there are no overlaps of the rules, i.e., any redex contained in
another redex must be below some variable position, hence in the substitution
part.

We will now show that a non-outermost reduction step followed by an outermost
reduction step is either on an independent position or on a position below the
outermost step.

\begin{lemma}
\label{lem:NOindep}
Let $R$ be an orthogonal TRS.
If $t_1 \noutto_{q} t_2 \outto_{p} t_3$, then $p \parallel q$ or $p < q$.
\end{lemma}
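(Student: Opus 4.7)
The strategy is to rule out the cases $p = q$ and $p > q$ by exhibiting a redex strictly above $p$ in $t_2$, which would contradict the outermostness of the step $t_2 \outto_p t_3$. I would handle both forbidden cases uniformly by assuming $q \leq p$ and deriving a contradiction.

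Since $t_1 \noutto_q t_2$, the redex reduced at $q$ is not outermost in $t_1$, so by definition there exists a position $r < q$ at which $t_1$ has a redex, say using a rule $\ell \to \rho$. The first main step of the proof is to argue that this higher redex persists in $t_2$, i.e.\ $t_2 \to_r$ as well. This uses orthogonality of $R$ in two ways: absence of critical pairs forces the position $q$, which is strictly below $r$, to lie within a variable position of $\ell$ (otherwise the two redexes would form a critical overlap); and left-linearity ensures that altering the contents of a single variable occurrence in $\ell\subst$ yields another instance $\ell\subst'$, so $t_2|_r$ still matches $\ell$. Hence $t_2$ still has a redex at $r$.

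With this in hand, suppose for contradiction that $q \leq p$. Then $r < q \leq p$, so $t_2$ admits a redex at a position strictly above $p$, contradicting the assumption that $t_2 \outto_p t_3$ is an outermost step. Therefore $q \not\leq p$, leaving exactly the two cases $p < q$ and $p \parallel q$ stated in the lemma.

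The only delicate point is the persistence argument in step two; once the high redex is known to survive, the rest reduces to a one-line case analysis on the prefix relation between $p$ and $q$. Because the persistence follows directly from orthogonality — an assumption explicitly made in the lemma — no further machinery (such as the \PMLref) needs to be invoked here, though one could phrase the persistence step as an immediate instance of it.
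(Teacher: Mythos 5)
Your proof is correct and follows essentially the same route as the paper: assume $q \le p$, use the non-outermostness of the first step to obtain a redex at some $r < q$, argue via orthogonality (non-overlap forces $q$ into the substitution part, left-linearity preserves the match) that this redex survives into $t_2$, and contradict the outermostness of $t_2 \outto_p t_3$. No gaps; the paper's proof is just a more explicit notational rendering of the same persistence argument.
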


\begin{proof}
Let $t_1 \noutto_{\ell_1 \to r_1,q} t_2 \outto_{\ell_2 \to r_2, p} t_3$.
Therefore, a position $q' < q$ exists such that $t_1 \to_{\ell' \to r', q'}$.

Assume that $p \nparallel q$ and $q \le p$ (where the latter implies the
former). Then
$t_1 = t_1[\ell'\subst'[\ell_1\subst_1]_{q \smallsetminus q'}]_{q'}$.
Since $R$ is orthogonal, there exists a variable $x$ and a context $C$ such that
$t_1 = t_1[\ell'\subst''\{x := C[\ell_1\subst_1]\}]_{q'}$, where $\subst''$ is
like $\subst'$ except that $\subst''(x) = x$. Therefore,
$t_1 = t_1[\ell'\subst''\{x := C[\ell_1\subst_1]\}]_{q'} \noutto_{q}
t_1[\ell'\subst''\{x := C[r_1\subst_1]\}]_{q'} = t_2$.
In this last term, the redex at position $p$ is contained, i.e.,
$t_2 = t_1[\ell'\subst''\{x := C[r_1\subst_1]\}]_{q'} =
t_1[\ell'\subst''\{x := C[r_1\subst_1[\ell_2\subst_2]_{p'}]\}]_{q'}$
for a position $p'$ such that $p = q' . (q \smallsetminus q') . p'$.
However, this contradicts our assumption $t_2 \outto_p t_3$, since
$t_2 \to_{q'}$ and $q' < q \le p$.
\end{proof}

The above lemma allows us to show that for such a sequence of steps, i.e., a
non-outermost step followed by an outermost step, we can swap the evaluation
order and still reach the same term. In the remainder of this section we denote
with $\noutparto_P$ parallel non-outermost steps, i.e., a parallel reduction
where all positions in the set $P$ are on non-outermost positions.

\begin{lemma}
\label{lem:SwapNOwithO}
For an orthogonal TRS $R$, if $t_1 \noutparto t_2 \outto_p t_3$,
then $t_1 \outto_p t \outto^* t' \noutparto t_3$ for some terms $t, t'$.
\end{lemma}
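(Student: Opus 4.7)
The plan is to first perform the outermost step at $p$ from $t_1$ (after verifying that $p$ is already an outermost redex of $t_1$), then apply the \PMLref{} to commute the remaining reductions past it, and finally organize the resulting residual parallel step as a sequence of outermost steps followed by a single parallel non-outermost step.

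I would begin by splitting the position set $P$ of the parallel step from $t_1$ to $t_2$ into $P_\parallel \cup P_<$ according to whether each $q \in P$ is independent of $p$ or strictly below $p$; this follows by decomposing the parallel step into independent single steps, bringing any chosen $q$ to the end by reordering, and applying \lemref{lem:NOindep}. Let $\ell \to r$ be the rule used at $p$ in $t_2 \outto_p t_3$. The heart of the argument is two applications of orthogonality. First, every $q \in P_<$ must satisfy that $q - p$ lies at or below a variable position of $\ell$, since otherwise the left-hand side used at $q$ would unify non-trivially with the subterm $\ell|_{q-p}$, contradicting the absence of critical pairs. Consequently the non-variable skeleton of $\ell$ is preserved from $t_1|_p$ to $t_2|_p$, so $t_1|_p = \ell\sigma_1$ for some $\sigma_1$ satisfying $x\sigma_1 \parto x\sigma_2$ for every variable $x$ of $\ell$. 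Second, a symmetric argument shows $p$ is outermost in $t_1$: a hypothetical redex at position $r' < p$ in $t_1$ would be preserved by every step in $P$ (each such step being either independent of $r'$ or at/below a variable of the rule at $r'$), contradicting the fact that $t_2 \outto_p t_3$ is outermost. Hence $t_1 \outto_p C_1[r\sigma_1]_p =: t$, where $C_1$ is the context of $t_1$ at $p$.

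With $t$ in hand I would apply the \PMLref{} to get $r\sigma_1 \parto r\sigma_2$, and combine this with the parallel context reduction $C_1 \parto C_2$ coming from $P_\parallel$. Since all the involved positions are pairwise independent, this yields a single parallel step $t \parto t_3$ at a position set $P'$ consisting of the residuals of $P_<$ (one copy at each occurrence in $r$ of the corresponding variable of $\ell$) together with $P_\parallel$. To split this as $\outto^*$ followed by $\noutparto$, I would greedily reduce any position of $P'$ that is outermost in the current term. Pairwise independence of $P'$ together with orthogonality implies that reducing at $q'' \in P'$ preserves the redex at every other position of $P'$, and also preserves their non-outermost character: if $q' \in P'$ has a witness redex above it, that witness cannot coincide with any $q'' \in P'$ (they would fail to be independent), and by orthogonality the step at $q''$ either misses the witness subterm or hits it only at a variable of the witness rule, so the witness survives. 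The process halts at some $t'$ where no remaining position of $P'$ is outermost, and those remaining positions form a valid parallel non-outermost step $t' \noutparto t_3$.

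The main obstacle is the orthogonality bookkeeping for the two pullback arguments above, showing both that $\ell$ already matches $t_1|_p$ and that no redex lies above $p$ in $t_1$. Once these preservation facts are in hand, the \PMLref{} and pairwise independence of $P'$ make the rest essentially routine.
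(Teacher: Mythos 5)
Your overall route is the same as the paper's: split the parallel positions via \lemref{lem:NOindep} into those parallel to $p$ and those strictly below $p$, pull the step at $p$ forward with the \PMLref{}, and then serialize the residual parallel step into outermost steps followed by one parallel non-outermost step. The one place where your argument as written does not hold up is the justification that every $q \in P_{<}$ has $q \smallsetminus p$ at or below a variable position of $\ell$: you argue that otherwise the left-hand side used at $q$ would unify with $\ell|_{q \smallsetminus p}$, contradicting orthogonality. But the redex at $q$ is a redex of $t_1$, whereas $\ell$ is matched at $p$ in $t_2$; at position $q$ of $t_2$ there sits the contractum, not the left-hand side used at $q$, so no overlap inside a single term is exhibited. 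In general a step strictly below $p$ can \emph{create} the redex at $p$ without any critical pair (with rules $\fsym{a} \to \fsym{b}$ and $\fsym{f}(\fsym{b}) \to \fsym{c}$, the step $\fsym{f}(\fsym{a}) \to \fsym{f}(\fsym{b})$ puts the non-variable symbol $\fsym{b}$ of the outer left-hand side in place), so absence of critical pairs alone does not yield your claim.

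The claim is nevertheless true under the lemma's hypotheses, and the repair uses exactly the ingredient you already exploit to show $p$ is outermost in $t_1$, namely the witness of non-outermostness. Since the step at $q$ is non-outermost, $t_1$ has a redex at some $w < q$; all other positions of $P$ are parallel to or strictly below $w$ (anything at or above $w$ would be comparable with $q$, contradicting pairwise independence), so by orthogonality and left-linearity this redex survives the parallel step. Hence $w < p$ is impossible (it would give a redex strictly above $p$ in $t_2$, contradicting $t_2 \outto_p t_3$), so $p \le w < q$. If $w = p$, then $t_1|_p$ is itself a redex and, by non-overlap with the step at $q$, the position $q \smallsetminus p$ is under a variable of its left-hand side; if $p < w < q$, then the surviving redexes at $p$ and $w$ in $t_2$ do not overlap, so $w \smallsetminus p$, and therefore also $q \smallsetminus p$, lies at or below a variable position of $\ell$. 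With this repair your proof goes through, and it is in fact more explicit than the paper's own, which leaves both this point and the outermostness of $p$ in $t_1$ implicit when it invokes the \PMLref{}; your greedy serialization of the final parallel step likewise makes precise the paper's remark that one can ``first reduce all outermost ones, then all non-outermost ones.''
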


\begin{proof}
Let $t_1 \noutparto_Q t_2 \outto_p t_3$ for some $Q \subseteq \Pos(t_1)$.
By \lemref{lem:NOindep}, we get that either $q \parallel p$ or $q > p$ for all
$q \in Q$.

If $q \parallel p$ for all $q \in Q$, then we can swap the two reductions, i.e.,
$t_1 \outto_p t \to_q t_3$ for some term $t$.
If $t \noutto_q t_3$, then we have the required shape.
Otherwise, if $t \outto_q t_3$ then we also have the required shape, since
$t_1 \outto_p t \outto_q t_3 \noutparto_{\emptyset} t_3$.

Otherwise, a maximal $\emptyset \ne Q' \subseteq Q$ exists such that $p < q'$
for all $q' \in Q'$.
Let $t_1 \noutparto_{Q'} t_2' \outto_p t_3'$.
Then, since $R$ is orthogonal, we can apply the
\PMLref{}, showing that $t_1 \outto_p t \parto t_3'$ for some $t$. All
redexes in the reduction $t \parto t_3'$ are on independent positions, hence
we can first reduce all outermost ones, then all non-outermost ones. Therefore,
a term $t'$ exists such that $t_1 \outto_p t \outto^* t' \noutparto_{Q''} t_3'$
for some set $Q''$, where $p < q''$ for all $q'' \in Q''$.
Because all positions in $Q \setminus Q'$ are independent from the
position $p$, they are also independent from the positions in $Q''$. Thus, we
get that $t_1 \outto_p t \outto^* t' \noutparto_{Q'' \cup (Q \setminus Q')}
t_3$.
\end{proof}

Using the above lemma, we can prove that any reduction can be split into an
outermost and a non-outermost reduction.

\begin{lemma}
\label{lem:SplitOandNO}
Let $R$ be an orthogonal TRS.

If $t \to^* t'$, then $t \outto^* \hat{t} \noutto^* t'$ for some $\hat{t}$.
\end{lemma}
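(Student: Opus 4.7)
The plan is to induct on the length $n$ of the reduction $t \to^n t'$, and to strengthen the conclusion to $t \outto^* \hat{t} \noutparto^* t'$; a final serialization step then converts the trailing $\noutparto^*$ into $\noutto^*$.

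The base case $n = 0$ is immediate. For the inductive step, I would decompose the reduction as $t \to s \to^{n-1} t'$ and apply the induction hypothesis to the tail, obtaining $s \outto^* \hat{s} \noutparto^* t'$. If the first step $t \to s$ is outermost, prepending it yields the desired form at once. The interesting case is $t \noutto s$, which I view as a single-position parallel step $t \noutparto s$ and then commute past the initial $\outto^*$ portion of the tail.

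For the commutation I would establish, by an inner induction on the length $k$, the auxiliary sub-claim that if $a \noutparto b \outto^k c$, then $a \outto^* d \noutparto^* c$ for some $d$. The base $k = 0$ is trivial. In the inner step, I decompose $a \noutparto b \outto b_1 \outto^{k-1} c$, apply \lemref{lem:SwapNOwithO} to the first three terms to obtain $a \outto v_1 \outto^* v_2 \noutparto b_1$, and then invoke the inner hypothesis on the shorter sequence $v_2 \noutparto b_1 \outto^{k-1} c$; concatenating delivers the claim. Applying the sub-claim to $t \noutparto s \outto^* \hat{s}$ and appending $\hat{s} \noutparto^* t'$ then produces the strengthened conclusion.

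It remains to unfold each $\noutparto_P$ step into a sequence of single $\noutto$ steps, which I would do by reducing the mutually independent positions in $P$ one at a time in any fixed order. Orthogonality ensures that the outermost redex witnessing non-outermostness of the position currently under reduction is preserved by the earlier reductions at other positions of $P$: any such earlier reduction is either at a position independent of that redex, or lies strictly below it, in which case it must sit below a variable of its left-hand side (otherwise two left-hand sides would overlap), so the redex survives with only its substitution changed. The main obstacle is the bookkeeping in the inner induction, where each application of \lemref{lem:SwapNOwithO} extends the reduction with additional $\outto$ steps and a fresh trailing $\noutparto$; one must verify that the residual portion still matches the inner hypothesis's pattern $\noutparto \cdot \outto^{k-1}$ before the induction can be applied.
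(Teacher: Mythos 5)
Your proof is correct and follows essentially the same route as the paper: induction on the length of the reduction, repeated use of \lemref{lem:SwapNOwithO} to commute (parallel) non-outermost steps past outermost ones, and a final serialization of the parallel non-outermost steps using orthogonality. The only difference is organizational---you peel off the \emph{first} step and push a non-outermost step rightwards via an explicit inner induction on the length $k$ of the outermost segment (which indeed preserves the pattern $\noutparto \cdot \outto^{k-1}$, so the flagged bookkeeping obstacle does not materialize), whereas the paper peels off the \emph{last} step and pushes an outermost step leftwards, leaving the ``repeated application'' and the unfolding of parallel steps implicit.
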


\begin{proof}
Let $t \to^n t'$. We perform induction on the length $n$ of this reduction.

If $n=0$, then $t=t'$ and nothing has to be shown.

Otherwise, let $t \to^{n-1} t_{n-1} \to t'$. We get from the induction
hypothesis that $t \outto^* \hat{t}' \noutto^* t_{n-1} \to t'$ for some
$\hat{t}'$. If $t_{n-1} \noutto t'$ then the lemma holds. So assume $t_{n-1}
\outto t'$. Then $\hat{t}' \noutto^* t_{n-1} \outto t'$ and therefore
$\hat{t}' \noutparto^* t_{n-1} \outto t'$.
Repeated application of \lemref{lem:SwapNOwithO} shows that for some $\hat{t}$,
$\hat{t}' \outto^* \hat{t} \noutparto^* t'$, hence
$t \outto^* \hat{t}' \outto^* \hat{t} \noutto^* t'$
by unfolding the parallel non-outermost steps, which proves the lemma.
\end{proof}

This allows us to show that for checking the productivity criterion of
\propref{prop:Prod}, we only have to consider outermost reductions.

\begin{lemma}
\label{lem:OutermostProd}
Let $R$ be an orthogonal TRS having a binary symbol $:$ in its signature.

If $t \to^* u : t_u$, then $t \outto^* u' : t_u' \to^* u : t_u$ for some terms
$u'$, $t_u'$.
\end{lemma}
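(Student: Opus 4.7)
The plan is to apply the previous lemma \lemref{lem:SplitOandNO} to split the given reduction into an outermost prefix followed by a non-outermost suffix, and then to exploit the fact that non-outermost steps can never touch the root position.

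First, I would apply \lemref{lem:SplitOandNO} to $t \to^* u : t_u$ to obtain a term $\hat{t}$ with
\[ t \outto^* \hat{t} \noutto^* u : t_u. \]
It then suffices to show that $\hat{t}$ already has $:$ as its root symbol, i.e., $\hat{t} = u' : t_u'$ for some $u', t_u'$; the conclusion $u' : t_u' \to^* u : t_u$ is then immediate, since $\noutto$-steps are in particular $\to$-steps.

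The key observation is that a $\noutto$-step must occur strictly below the root. Indeed, by \defref{def:NonOutermost}, $s \noutto_q s'$ means $s \to_q s'$ is not outermost, so there must exist a position $q' < q$ at which $s$ also has a redex; in particular $q \ne \epsilon$. Therefore every step in the reduction $\hat{t} \noutto^* u : t_u$ takes place at a position $q > \epsilon$, so none of them modifies the root symbol. Since the final term $u : t_u$ has root $:$, so does $\hat{t}$, and hence $\hat{t} = u' : t_u'$ for suitable terms $u'$ and $t_u'$, which completes the proof.

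I do not expect any serious obstacle here: the entire argument reduces to \lemref{lem:SplitOandNO} plus the trivial observation that non-outermost positions are non-empty. The only point worth being careful about is distinguishing "outermost" from "at the root": the root position $\epsilon$ is always outermost (vacuously, as nothing lies strictly above it), which is precisely what forces every $\noutto$-step to live strictly below $\epsilon$ and hence to preserve the root symbol.
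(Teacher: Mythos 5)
Your proposal is correct and follows essentially the same route as the paper: apply \lemref{lem:SplitOandNO} and then note that non-outermost steps cannot occur at the root (a root redex is always outermost), so the root symbol $:$ must already be present in $\hat{t}$. The paper merely phrases this last observation as a contradiction argument rather than directly, but the content is identical.
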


\begin{proof}
Let $t \to^* u:t_u$. Then by \lemref{lem:SplitOandNO}, $t \outto^* \hat{t}
\noutto^* u : t_u$. If $\rt(\hat{t}) = {:}$, then the lemma holds.
Otherwise, $\rt(\hat{t}) \ne {:}$.
Let $\hat{t} = t_0 \noutto t_1 \noutto \dotsb \noutto t_k = u : t_u$.
Then for all $1 \le i \le k$, $\rt(t_i) = \rt(t_{i-1})$, since none of the terms
can be reduced at the root position as this would be an outermost reduction
step. This however gives a contradiction, because ${:} \ne \rt(\hat{t}) =
\rt(t_0) = \rt(t_1) = \dots = \rt(t_k) = \rt(u:t_u) = {:}$.
\end{proof}

Next, we prove two technical lemmas that will be used to prove completeness of
our main theorem.
In the first we handle the case where a redex in a term that starts an infinite
balanced outermost reduction is also reduced at that position later in the
infinite balanced outermost reduction. In this case, we can bring forward this
step and still get an infinite balanced outermost reduction.

\begin{lemma}
\label{lem:BalOutStaysBalOutOnEqPos}
Let $R$ be an orthogonal TRS for which '$:$' is a constructor.

If $t_0 \outto_{p_1} t_1 \outto_{p_2} \dotsb \outto_{p_j} t_j \outto_{p_{j+1}}
\dotsb$ is an infinite balanced outermost reduction, where for all $i \in
\nat$, $\rt(t_i) \ne {:}$, $t_0 \outto_{p_j} t_1'$, and $p_i \not\le p_j$ for
all $1 \le i < j$, then an infinite balanced outermost reduction
$t_0 \outto_{p_j} t_1' \outto \dotsb$ exists,
where for all $i \in \nat$, $\rt(t_i') \ne {:}$.
\end{lemma}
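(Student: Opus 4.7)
The plan is to construct the desired infinite balanced outermost reduction by pushing the step at position $p_j$ to the front. First I would show that $p_i \parallel p_j$ for all $1 \le i < j$. The hypothesis $p_i \not\le p_j$ excludes $p_i \le p_j$. To rule out $p_i > p_j$, I would argue by induction on $i$: the redex at $p_j$ survives in every $t_{i-1}$ because all earlier steps $p_1, \dotsc, p_{i-1}$ are parallel to $p_j$ by induction, so outermostness of $t_{i-1} \outto_{p_i} t_i$ prevents $p_i$ from lying strictly below $p_j$.

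With the pairwise parallelism of $p_1, \dotsc, p_j$ in hand, iterated use of the \PMLref\ produces the reduction
\[
t_0 \outto_{p_j} t_1' \to_{p_1} t_2' \to_{p_2} \dotsb \to_{p_{j-1}} t_j \outto_{p_{j+1}} t_{j+1} \outto_{p_{j+2}} \dotsb,
\]
which reaches the same $t_j$ after $j$ steps and then coincides with the original tail. The first step is outermost by hypothesis and the tail is outermost by assumption. For each middle step $t_k' \to_{p_k} t_{k+1}'$ (with $1 \le k < j$) I would verify outermostness. A new redex at some $q < p_k$ in $t_k'$ that is absent from the original $t_{k-1}$ must have been created by the earlier swapped reduction at $p_j$ modifying the subterm below $q$ (which forces $q < p_j$ as well, using $p_k \parallel p_j$). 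A case analysis on whether $p_j \smallsetminus q$ is a variable or non-variable position of the hypothetical matching LHS either yields a redex at $q$ that was already present in $t_{k-1}$ (contradicting outermostness of the original step at $p_k$), or forces, via orthogonality and the constructor property of \mbox{`:'}, a structural coincidence ruled out by the shape of the rules.

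Balancedness is then inherited from the original: by tracking residuals through the parallel-moves reorderings, every redex appearing in a term of the new sequence corresponds to a redex in the original that is eventually reduced or covered. For $\rt(t_i') \ne {:}$: since $\epsilon \le p_j$, the condition $p_i \not\le p_j$ for $i < j$ forces $p_i \ne \epsilon$, so none of the first $j-1$ steps is at the root. Only $p_j$ may be a root step among the first $j$; in that case the constructor property of \mbox{`:'} together with $\rt(t_j) \ne {:}$ forces the applied rule's RHS root to be different from \mbox{`:'}, whence $\rt(t_1') \ne {:}$, and the remaining middle terms share this root. For $i \ge j$ the new and old sequences agree.

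The principal obstacle is the outermostness verification for the middle steps through the swap, i.e.\ ruling out ``created'' redexes above a swapped position. This is the technical heart of the argument and depends on combining orthogonality of $R$ with the constructor property of \mbox{`:'}.
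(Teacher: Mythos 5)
There is a genuine gap at what you yourself identify as the technical heart: the claim that after moving the $p_j$-step to the front, the remaining steps at $p_1,\dotsc,p_{j-1}$ can be verified to still be outermost. Your Case~2 (the position $p_j \smallsetminus q$ is a non-variable position of the candidate left-hand side $\ell$ at a common ancestor $q < p_j$, $q < p_k$) is exactly upward redex creation: the contractum $r_j\subst_j$ may complete the pattern of $\ell$ above, producing a redex at $q$ that was absent in $t_{k-1}$. Orthogonality only forbids overlaps between left-hand sides; it says nothing about an instance of a \emph{right-hand side} matching inside another pattern, and ``:'' being a constructor does not help either, since the lemma is stated for an arbitrary orthogonal TRS (and even for stream-specification rules, a step producing ``:'' or a data constructor at its root routinely creates a redex one level up --- that is precisely the mechanism productivity relies on). So this case cannot be ``ruled out by the shape of the rules'', and the reordered middle steps genuinely can be non-outermost; your construction of the new reduction therefore does not go through as stated. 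The subsequent balancedness argument (``tracking residuals'') is also left at the level of a sketch, and it is exactly where the difficulty resurfaces.

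The paper's proof takes a different route at this point, and that is how it escapes the problem: it does \emph{not} insist that the steps at $p_1,\dotsc,p_{j-1}$ stay outermost after the swap. Instead it applies \lemref{lem:SplitOandNO} to the finite reduction $t_1' \to^* t_j$ to split it into an outermost prefix followed by non-outermost steps, then proves two facts using balancedness of the original reduction: every redex in the rearranged reduction is eventually reduced or consumed, and every non-outermost step sits below a redex position that persists into $t_j$ and hence below some $p_m$ with $m > j$. Finally it applies \lemref{lem:SwapNOwithO} repeatedly to the infinite suffix to defer the non-outermost steps forever, yielding an infinite outermost reduction, and checks separately that it is balanced and that no term acquires ``:'' as root (using that ``:'' is a constructor). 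If you want to salvage your approach, you would have to adopt essentially this mechanism for the created non-outermost redexes rather than trying to exclude them.
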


\begin{proof}
First we show that $p_i \parallel p_j$ for all $1 \le i < j$.
For this, we perform induction on $j-i$ and prove that if $i<j$ and $t_{i-1}
\to_{p_j}$, then $p_i \parallel p_j$ and $t_i \to_{p_j}$.

If $j-i=0$, then $i = j$ and the claim vacuously holds.
Otherwise, we have $t_{i-1} \outto_{p_i} t_i$ and $t_{i-1} \to_{p_j}$. If $p_i
\le p_j$, we have a contradiction to the requirement $p_i \not\le p_j$, since $i
< j$. If $p_i > p_j$, then we also have a contradiction, since then $t_{i-1}
\not\outto_{p_i}$. Hence, $p_i \parallel p_j$ and therefore also $t_i
\to_{p_j}$.

This shows that all positions $p_i$ with $1 \le i < j$ are on independent
positions from $p_j$, since $t_0 \to_{p_j}$ by assumption. Therefore, we can
swap their order and get a reduction $t_0 \outto_{p_j} t_1' \to_{p_1} t_2'
\to_{p_2} \dots \to_{p_{j-1}} t_j' = t_j$.
Due to \lemref{lem:SplitOandNO}, there exists a $\hat{t}_1$ such that
$t_0 \outto_{p_j} t_1' \outto^* \hat{t}_1 \noutto^* t_j$.
Let $t_0 = t_0' \outto_{q_1} t_1' \outto_{q_2}
\dotsb \outto_{q_k} t_k' = \hat{t}_1 \noutto_{q_{k+1}} \dotsb \noutto_{q_l} t_l'
= t_j$, where $q_1 = p_j$. Furthermore, let $q_{l+m} = p_{j+m}$ and
$t_{l+m}' = t_{j+m}$ for all $m \ge 1$.
We will now show that every redex in this reduction is eventually reduced or
consumed by a higher redex.

Assume not, i.e., there exists $i \ge 0$ such that for some $q \in \Pos(t_i')$,
$t_i' \to_q$ and for all $m>i$, $q_m \not\le q$, i.e., either $q_m > q$ or $q_m
\parallel q$. We can conclude that $i < l$, since $t_l' = t_j$ and $t_j$ is part
of the balanced outermost reduction $t_0 \outto_{p_1} \dotsb$. If $0 \le i < k$,
then $t_i' \outto_{q_{i+1}} t_{i+1}' \outto_{q_{i+2}} \dotsb \outto_{q_k} t_k'$.
If $q_{i+1} > q$, then because of $t_i \to_q$ we would have
$t_i \not\outto_{q_{i+1}}$; therefore this cannot occur.
If $q_{i+1} \parallel q$, then we also have
$t_{i+1}' \to_{q}$. Applying this repeatedly shows that $t_k' \to_q$, i.e.,
it suffices to investigate the case where $i \ge k$. In this case, we have $t_i'
\noutto_{q_{i+1}} \dotsb \noutto_{q_l} t_l' = t_j$. If $q_{i+1} \parallel q$,
then also $t_{i+1}' \to_q$. Otherwise, if $q_{i+1} > q$, then due to the
\PMLref{}, we also have $t_{i+1}' \to_q$. Applying this repeatedly
shows that $t_l' = t_j \to_q$ and for all $m > l$ we have $q_m \not\le q$. This
however is a contradiction, since $t_j$ was contained in the initial balanced
outermost reduction. This shows our claim.

Furthermore, any non-outermost step of the above reduction, i.e., any step
$t_{i}' \noutto_{\ell_{i+1} \to r_{i+1}, q_{i+1}} t_{i+1}'$ for $k \le i < l$,
is below some position $p_m$ for $m > j$. To show this, let $t_i' = t_i'
[\ell_{i+1}\subst_{i+1}]_{q_{i+1}}$. Then a position $q' < q_{i+1}$ exists such
that $t_i' = t_i' [\ell\subst[\ell_{i+1}\subst_{i+1}]_{q_{i+1} \smallsetminus
q'}]_{q'} \to_{q'}$ for some $\ell \to r \in R$. Since $R$ is orthogonal, there
must be a variable $x$ and a context $C$ such that $t_i' = t_i' [\ell\subst'\{ x
:= C[\ell_{i+1}\subst_{i+1}] \}]_{q'}$, where $\subst'$ is like $\subst$, except
that $\subst'(x) = x$. Then $t_i' = t_i' [\ell\subst'\{ x :=
C[\ell_{i+1}\subst_{i+1}] \}]_{q'} \noutto_{q_{i+1}} t_i' [\ell\subst'\{ x :=
C[r_{i+1}\subst_{i+1}] \}]_{q'} = t_{i+1}' \to_{q'}$, i.e., $t_{i+1}'$ still
contains a redex at position $q'$.
Repeating this argument, we see that for every reduced non-outermost redex,
there is a still a redex above it in the term $t_l' = t_j$. However, for every
such redex at some position $q'$, there is a position $p_m$ with $m > j$ such
that $p_m \le q'$ due to the initial balanced outermost reduction, showing our
claim.

To the reduction $t_0 = t_0' \outto_{p_j} t_1' \outto^* \hat{t}_1 \noutto^* t_j
\outto_{p_{j+1}} \dotsb$ we can now repeatedly apply \lemref{lem:SwapNOwithO}
to get the outermost reduction $t_0 = t_0' \outto_{p_j} t_1' = t_1'' \outto^*
\hat{t}_1 \outto_{p_{j+1}} t_2'' \outto^* \hat{t}_2 \outto_{p_{j+2}} \dotsb$.
This is a balanced outermost reduction due to the above observations, since
every redex in a reduction $t_i'' \outto^* \hat{t}_i$ is eventually reduced or
consumed and every redex in a reduction $\hat{t}_i \noutto^* t_{i+1}''$ is below
some position $p_m$ that is reduced later in the reduction.

Finally, we have to show that none of the terms in the constructed infinite
balanced outermost reduction has a '$:$' symbol as its root. If this was not the
case, there would be a term $t''$ with $\rt(t'') = {:}$ and $t'' \outto^*
\hat{t}_m$ for some $m$. However, for every such term $\hat{t}_m$, we have that
$\hat{t}_m \noutto^* t_n$ for some $n$. Since '$:$' is a constructor of $R$, we
would have that ${:} = \rt(t'') = \rt(\hat{t}_m) = \rt(t_n) \ne {:}$, giving a
contradiction and hence showing the desired property.
\end{proof}

The second case we have to consider is that a redex in a term starting an
infinite balanced outermost reduction is strictly below some reduction step. But
also in this case, we will show that we can reduce the redex and still get an
infinite balanced outermost reduction.

\begin{lemma}
\label{lem:ProdBelowBalOut}
Let $R$ be an orthogonal TRS for which ${:}$ is a constructor,
$t_0 \outto_{p_1} t_1 \outto_{p_2} \dots$ be an infinite balanced outermost
reduction with $\rt(t_i) \ne {:}$ for all $i \ge 0$,
$t_0 \to_{\ell \to r, q_1} t_0[r\subst]_{q_1} = t_1'$, and let $p_j \le q_1$
be minimal, with $p_j < q_1$.

Then an infinite balanced outermost reduction $t_0[r\subst]_{q_1} = t_1'
\outto_{p_1'} t_2' \outto_{p_2'} \dots$ exists with $\rt(t_i') \ne {:}$ for all
$i \ge 1$.
\end{lemma}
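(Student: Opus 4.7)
The plan is to commute the step at $q_1$ to the front of the given infinite balanced outermost reduction via the \PMLref{}, then rearrange the resulting interleaved reduction into a balanced outermost reduction using the same machinery as in the proof of \lemref{lem:BalOutStaysBalOutOnEqPos}.

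First I would prove by induction on $i$ that each $t_i$ for $0 \le i \le j-1$ has a redex at position $q_1$. The base case $i = 0$ is given. For the inductive step, minimality of $p_j$ gives $p_i \not\le q_1$ for $1 \le i < j$, so either $p_i \parallel q_1$ or $p_i > q_1$. In the parallel case the subterm at $q_1$ is untouched. In the case $p_i > q_1$, the step occurs inside $t_{i-1}|_{q_1} = \ell\subst_{i-1}$; by orthogonality of $R$, the redex at $p_i$ cannot overlap the redex at $q_1$ non-trivially, so it must lie at or below a variable-substitution position of $\ell$, and the $q_1$-redex survives with a possibly modified substitution $\subst_i$.

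Next I would push the $q_1$-step to the front. Since the $q_1$-redex persists in each $t_i$ for $i < j$ and the $p_j$-rule is still matched after the $q_1$-step (because $p_j < q_1$, so the $q_1$-step only modifies the substitution part of the $p_j$-redex), iterated application of the \PMLref{} produces $t_0 \to_{q_1} t_1' \parto^* \tilde{t}_{j-1} \to_{p_j} \tilde{t}_j$, where $t_j \parto \tilde{t}_j$ via a parallel reduction at the possibly several descendants of the $q_1$-redex duplicated by the $p_j$-step. For each subsequent step $p_k$ with $k > j$ in the original reduction, a further application of the \PMLref{} pushes the accumulated parallel reduction through, yielding an infinite reduction from $t_1'$ that mirrors the original, with extra non-outermost parallel steps interleaved.

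I would then apply \lemref{lem:SplitOandNO} together with iterated applications of \lemref{lem:SwapNOwithO} exactly as in the proof of \lemref{lem:BalOutStaysBalOutOnEqPos} to extract an infinite outermost reduction $t_1' \outto \tilde{t}_1'' \outto \tilde{t}_2'' \outto \cdots$. Balancedness follows because every redex in this reduction either descends from an outermost step $p_k$ of the original (hence is eventually reduced) or lies below some outermost descendant of such a $p_k$ (hence is consumed). The property $\rt(\tilde{t}_i'') \ne {:}$ follows from the facts that '$:$' is a constructor, that all modifications relative to the original $t_i$ occur strictly below $p_j$ (hence not at the root), and that $\rt(t_i) \ne {:}$ for all $i$. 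The main obstacle is the bookkeeping required for balancedness: the \PMLref{} can duplicate the $q_1$-reduction into several parallel non-outermost steps, and each such descendant must be tracked to ensure it is eventually reduced or consumed in the rearranged outermost reduction.
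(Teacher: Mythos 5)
Your overall route is the same as the paper's: argue that the $q_1$-redex persists (by orthogonality it sits below a variable position of the $p_j$-redex) up to $t_{j-1}$, project the infinite balanced outermost reduction over the contraction at $q_1$ by tracking the residuals of that redex (the paper does this bookkeeping explicitly with position sets $Q_k$; your iterated use of the \PMLref{} amounts to the same residual construction), then restore outermostness by repeated application of \lemref{lem:SwapNOwithO} and verify that no root becomes `$:$'. (A minor remark: your subcase $p_i > q_1$ for $i < j$ is vacuous, since once the redex at $q_1$ is known to persist, a step strictly below it would not be outermost; so in fact $p_i \parallel q_1$ for all $i<j$ and the substitution $\subst$ is unchanged.)

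The genuine gap is at the decisive point, which you yourself flag as ``the main obstacle'' but do not resolve: showing that the reduction obtained after the \lemref{lem:SwapNOwithO} rearrangement is still \emph{balanced}. Contracting the residuals of the $q_1$-redex can create redexes above the inherited positions $p_k$, so some projected steps become non-outermost, and \lemref{lem:SwapNOwithO} defers exactly these steps, possibly forever. Your one-sentence justification --- every redex either ``descends from an outermost step $p_k$ of the original (hence is eventually reduced) or lies below some outermost descendant of such a $p_k$ (hence is consumed)'' --- is not the right dichotomy and is not an argument: the problematic case is a redex at a position $q$ all of whose covering positions $p_h \le q$, inherited from the balanced projected reduction, are precisely the deferred non-outermost steps. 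The paper closes this case with a specific observation you would need (or an equivalent): a non-outermost step has an outermost redex strictly above it; that outermost position is not deferred by the rearrangement, so it is eventually reduced, and reducing it consumes the redex at $q$. Without this argument (together with the transfer of balancedness from the original reduction through the residual tracking), balancedness of the constructed reduction --- and hence the lemma --- is not established.
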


\begin{proof}
Let $t_{j-1} = t_0 [r_1\subst_1]_{p_1} \dotsc [r_{j-1}\subst_{j-1}]_{p_{j-1}}
[\ell_j\subst_j]_{p_j}$.
Then for some variable $x \in V(\ell_j)$ and some context $C$ we have that
$t_{j-1} = t_0 [r_1\subst_1]_{p_1} \dotsc [r_{j-1}\subst_{j-1}]_{p_{j-1}}
[\ell_j\subst_j' \{ x := C[\ell\subst] \} ]_{p_j} \outto_{p_j} t_0
[r_1\subst_1]_{p_1} \dotsc [r_{j-1}\subst_{j-1}]_{p_{j-1}} \linebreak[5]
[r_j\subst_j' \{ x := C[\ell\subst] \} ]_{p_j} = t_j$,
where $\subst'_j$ is like $\subst_j$, except that $\subst_j'(x) = x$.
%

If $x \notin V(r_j)$, then the lemma trivially holds.

Otherwise, let $p_{\hole} \in \Pos(C)$ such that $C|_{p_{\hole}} = \hole$ and
$Q_j = \{ q \in \Pos(t_j) ~|~ q = p_j.p'.p_{\hole} \text{ and }
r_j|_{p'} = x \} = \{ q_1^j, \dotsc, q_{m_j}^j \}$.
Then $t_j \to_{\ell \to r, q}$ for all $q \in Q_j$.
Furthermore, we define for all $k > j$, $Q_{k} = \{q_1^k, \dotsc, q_{m_k}^k\} =
(Q_{k-1} \setminus \{ q \in Q_{k-1} ~|~ p_k \le q \})
\cup
\{ q' \in \Pos(t_k) ~|~ \exists q \in Q_{k-1}: q = p_k . p . p',
\ell_k|_{p} = y \in V,
\text{ and } q' = p_k . p'' . p' \linebreak
\text{where } r_k|_{p''} = y
\}$, i.e., we update the set of positions such that independent positions are
kept, positions that are reduced are removed, and positions below a reduction of
the infinite balanced outermost reduction are modified such that they reflect
the position of the redex in the right-hand side. This can be
done since the TRS is orthogonal, which especially implies that a contained
redex cannot overlap with the left-hand side of a rule that is applied above it,
therefore it has to be below a variable position in the left-hand side.

Hence, we have for all $k > j$ either $p_k \parallel q$ for all $q \in Q_{k-1}$,
$p_k = q$ for some $q \in Q_{k-1}$, or $p_k < q$ for some $q \in Q_{k-1}$
($p_k$ cannot be below some $q$, since otherwise it would not be outermost).
In the first case, the reduction
$t_{k-1} [r\subst]_{q_1^{k-1}} \dotsc [r\subst]_{q_m^{k-1}} \to_{p_k}$
is unaffected. In the second case, where $p_k = q$, we can remove this reduction
step. In the third and final case, where $p_k < q$, this reduction is also still
possible, since $R$ is orthogonal and reductions inside another redex cannot
destroy the outer redex. Hence, we can again apply the argument and get an
infinite reduction $t_1' = t_0[r\subst]_{q_1} \to_{p_1'} t_2' \to_{p_2'} \dots$,
where the positions $p_i'$ are the positions $p_i$ after removing reduction
steps as described above. This reduction is balanced, but not
necessarily outermost. However, we can repeatedly apply
\lemref{lem:SwapNOwithO} to get an infinite outermost reduction, which will
defer non-outermost steps forever. To see that this reduction is balanced,
assume the contrary. Then, a term $t_a'$
and a position $q \in \Pos(t_a')$ exist such that
$t_a'|_{q} \to$ and this redex is never reduced or consumed, and there exists
$h > a$ such that $p_h \le q$ since the non-outermost reduction was balanced.
Since \lemref{lem:SwapNOwithO} only swaps non-outermost reductions to the end,
it must be the case that all $p_h \le q$ are non-outermost. Then however an
outermost position $p_{h'} < p_h$ exists, hence it is not deferred
forever. This gives a contradiction, since this position is reduced eventually,
consuming the redex at position $q$.

Finally, we show that $\rt(t_i') \ne {:}$ for all $i \ge 1$. Assume this
not to be the case, i.e., there is a minimal $t_i'$ with $\rt(t_i') = {:}$. Then
$p_i' = \epsilon$ and for $t_{i-1}' \to_{\ell' \to r', p_i'} u : t'$ it must be
the case that $\rt(r') = {:}$. However, since this step was also contained in
the original infinite balanced outermost reduction, this would contradict the
requirement that $\rt(t_i) \ne {:}$. Furthermore, since ${:}$ is a constructor,
also reordering the reductions into an outermost reduction cannot introduce a
term with ${:}$ as root symbol, since otherwise this term could be reduced to a
term $t_i'$ with $\rt(t_i') = {:}$, which we have shown to be false. This proves
the lemma.
\end{proof}

Using the above lemmas, we can finally prove completeness of our main theorem.

\begin{proof}[Proof of Completeness of \thmref{thm:ProdEquivBalOutTerm}]
Assume $R_d \cup R_s \cup \{ x:\sigma \to \ovf \}$
is not balanced outermost terminating, but $(\Sigma_d, \Sigma_s, R_d, R_s)$ is
productive. Then a term $t$ exists that allows an infinite balanced
outermost reduction $t = t_0' \outto t_1' \outto t_2' \outto \dotsb$ and there
exists a reduction $t \to^* u' : t_u'$. Since the symbol $\ovf$ does not occur
on any left-hand side of $R_d \cup R_s$, we conclude that for all $i \ge 0$,
$\rt(t_i') \ne {:}$, since otherwise the rule $x:\sigma \to \ovf$ would be
applicable and no further reductions would be possible.

We can also construct an infinite balanced outermost reduction \wrt $R_d \cup
R_s$ from the given one by removing all applications of the rule $x:\sigma \to
\ovf$, since the symbol $\ovf$ does not occur on any left-hand side of $R_d \cup
R_s$. This might leave some redexes that previously were contained in a redex
\wrt that rule. However, these redexes can only be on positions above which
never a reduction step takes place, hence we can reduce them at any time. Thus,
we have an infinite balanced outermost reduction $t = t_0 \outto_{p_1} t_1
\outto_{p_2} t_2 \outto_{p_3} \dotsb$ \wrt the orthogonal TRS $R_d \cup R_s$,
where for all $i \ge 0$, $\rt(t_i) \ne {:}$.

By \lemref{lem:OutermostProd} we get that an outermost reduction $t
\outto_{q_1} t_1^p \outto_{q_2} \dotsb \outto_{q_n} t_n^p = u : t_u$ exists. Due
to the definition of balanced outermost reductions, we have that a minimal $j$
exists such that $p_j \le q_1$.
Case distinction on the relation of $p_j$ and $q_1$ is performed. If $p_j = q_1$
then we get from \lemref{lem:BalOutStaysBalOutOnEqPos} an infinite balanced
outermost reduction $t_0 \outto_{p_j} t_1^p = t_1' \outto t_2' \outto \dots$.
Otherwise, if $p_j < q_1$, \lemref{lem:ProdBelowBalOut} gives us an
infinite balanced outermost reduction $t_1^p = t_1' \outto t_2' \outto
\dots$.

In both cases, we furthermore have that $\rt(t_i') \ne {:}$ for all $i > 0$.
Hence, by induction on $n$ we get an infinite balanced outermost reduction $u :
t_u \outto \dots$ in which no term has as root symbol '$:$', which yields the
desired contradiction and therefore completes the proof.
\end{proof}

\section{Using Outermost Termination Tools}
\label{sec:BalancednessForFree}

As stated in the introduction, balancedness is obtained for free in case there
are no rewrite rules for the data, i.e., $R_d = \emptyset$, and there are no
rules in $R_s$ that have more than one argument of stream type $s$. In this
section we prove that claim, which allows us to apply automatic tools for
proving outermost termination to show productivity of stream specifications.

\begin{proposition}
\label{prop:BalFree}
Let $(\Sigma_d, \Sigma_s, R_d, R_s)$ be a stream specification with
$R_d = \emptyset$ and the type of all $f \in \Sigma_s$ is of the form
$d^n \times s^m \to s$ for some $n \in \nat$, $m \in \{ 0,1 \}$.

Then every infinite outermost reduction $t_0 \outto t_1 \outto t_2 \outto \dots$
is balanced.
\end{proposition}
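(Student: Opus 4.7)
The plan is to exhibit a \emph{linear spine} structure on sort-$s$ terms that forces all redex positions in the same term to be prefix-comparable; unbalancedness then immediately contradicts outermostness, with no need for orthogonality or persistence arguments.

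First I would define the \emph{stream spine} of a sort-$s$ term $t$ as the set of positions $p\in\Pos(t)$ with $t|_p$ of sort $s$. Since every symbol of $\Sigma_d$ has type $d^n\to d$, no sort-$s$ subterm ever occurs below a sort-$d$ position. Combined with the hypothesis that every $f\in\Sigma_s$ as well as ${:}$ has at most one argument of sort $s$, the sort-$s$ subterms of $t$ are precisely those reached from the root by following the unique (if any) sort-$s$ argument at each step. Hence the spine is a linear chain, totally ordered by the prefix relation.

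Next I would observe that every redex in any term of the reduction sits on its spine. Since $R_d=\emptyset$, the only available rules are those of $R_s$ (and possibly the extension by $x:\sigma\to\ovf$, if that system is meant); all their left-hand sides are rooted by a symbol of $\Sigma_s$ or by ${:}$ and therefore have sort $s$. Consequently any two redex positions in the same term $t_j$ lie on the totally ordered spine of $t_j$ and are prefix-comparable.

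Finally, suppose for contradiction that the reduction $t_0\outto_{p_0} t_1\outto_{p_1} t_2\cdots$ is not balanced. Then there exist $i$ and a redex of $t_i$ at some position $q$ such that $p_j\not\le q$ for every $j\ge i$. In particular $p_i\not\le q$; since $p_i$ and $q$ are both redex positions of $t_i$, the spine observation makes them prefix-comparable, forcing $p_i>q$. But then $t_i$ has a redex at $q<p_i$, contradicting outermostness of the step $t_i\outto_{p_i} t_{i+1}$. The only real content of the argument is the spine observation; the sole potential pitfall is verifying that no sort-$s$ subterm can be hidden inside a sort-$d$ argument, which is immediate from the typing constraint that every symbol of $\Sigma_d$ has type $d^n\to d$.
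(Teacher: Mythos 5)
Your proof is correct, and it rests on the same structural fact as the paper's: with $R_d=\emptyset$ and at most one stream argument per symbol, every redex lies on the unique chain of sort-$s$ positions, so the position of an outermost step is a prefix of every redex position in that term, which makes balancedness immediate (take $j=i$ in \defref{def:OutBal}). The packaging differs slightly. The paper proves the key claim ``if $t\outto_p t'$ then $p\le p'$ for every redex position $p'$ of $t$'' by structural induction on $t$, and in the base cases it invokes the exhaustive-matching requirement of stream specifications (a constant of $\Sigma_s$, or an $f$ with no stream argument, always has a root redex); you instead derive prefix-comparability of all redex positions directly from the sorting discipline (sort-$s$ subterms cannot sit under sort-$d$ positions, and each node has at most one sort-$s$ child), needing neither induction nor the matching requirement, and then conclude by contradicting outermostness. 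Your route is marginally more general — it works for any TRS over this two-sorted signature whose rules all have sort-$s$ left-hand sides, including the extension by $x:\sigma\to\ovf$, regardless of whether left-hand sides are exhaustive — while the paper's induction makes explicit where the hypotheses $R_d=\emptyset$ and $m\le 1$ enter case by case. The only point worth stating explicitly in your write-up is the (trivial) observation that a term whose root has sort $d$ contains no redex at all, so every term in an infinite reduction is of sort $s$ and your spine argument applies.
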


\begin{proof}
We perform structural induction to show that for any reduction step
$t \outto_p t'$, we have that $p \le p'$ for all positions $p' \in \Pos(t)$ with
$t \to_{p'}$.

If $t = c \in \Sigma_s$, then by requirement of stream specifications we have
that $t \to_{\epsilon}$, hence $p = \epsilon$. Since $\epsilon \le p'$ for all
$p' \in \Pos(t)$, we haven proven this case.

Otherwise, if $t = f(u_1, \dotsc, u_n)$ (i.e., there is no argument of stream
type), then we again conclude that $t \to_{\epsilon}$. This is due to $R_d = \emptyset$
and the requirements of stream specifications, note that no data operations are 
allowed with arguments of stream type. So we have also proven this case.

In the final case to consider, we have $t = f(u_1, \dotsc, u_n, t')$.
If $t \to_{\epsilon}$, then again we must have that $p = \epsilon$ and hence
have proven the case. Therefore, assume that $p > \epsilon$. Since
$u_1, \dotsc, u_n \in D = \nf(R_d)$, because $R_d = \emptyset$, it must be the
case that for all $p' \in \Pos(t)$ with $t \to_{p'}$, $n+1 \le p'$, hence this
especially holds for $p$ as well. Therefore, we get from the induction
hypothesis that for the reduction step $t' \outto_{p \smallsetminus n+1} t''$,
$p \smallsetminus n+1 \le p''$ for all positions $p'' \in \Pos(t')$ with
$t' \to_{p''}$. Because $t \outto_p t'$, $p = (n+1) . (p \smallsetminus n+1)$,
and for all $p' \in \Pos(t)$ with $t \to_{p'}$ we have $p' = (n+1) . p''$,
it also holds that $p \le p'$, proving this final case and therefore the
proposition.
\end{proof}

The specification of the Thue Morse sequence given in the introduction shows
the necessity of requiring at most one argument to be of stream type.
It was already observed that the infinite reduction
\[\tl(\ms) \to \tl(0 : \zip(\inv(\ms),\tl(\ms))) \to \zip(\inv(\ms),\tl(\ms))
\to \dots,\]
continued by repeatedly reducing the redex $\tl(\ms)$, is outermost
but not balanced.
To show that also the requirement $R_d = \emptyset$ is needed, we again give
an example that allows to construct an infinite outermost reduction that is not
balanced. Consider the stream specification

\[
\begin{array}{rcl}
\tl(x:\sigma) &=& \sigma
\\
\fsym{c} &=& 0 : \fsym{f}(\nt(1),\tl(\fsym{c}))
\\
\fsym{f}(0,\sigma) &=& 1 : \fsym{f}(0,\sigma)
\\
\fsym{f}(1,\sigma) &=& 0 : \fsym{f}(1,\sigma)
\end{array}
\]
together with the rules $R_d = \{ \nt(0) \to 1, \nt(1) \to 0 \}$. This stream
specification is productive, as can be checked with the productivity tool
of~\cite{EGH08}. However, there also exists an infinite outermost reduction,
namely
\[
\tl(\fsym{c}) \to \tl(0 : \fsym{f}(\nt(1),\tl(\fsym{c})))
    \to \fsym{f}(\nt(1),\tl(\fsym{c}))
    \to \dots,
\]
which is continued by repeatedly reducing the redex $\tl(\fsym{c})$. This
redex is outermost, since both rules having the symbol $\fsym{f}$ as root
require either $0$ or $1$ as first argument. To apply one of these rules,
the outermost redex $\nt(1)$ would have to be reduced first, which shows
that the above infinite outermost reduction is not balanced.

To also present an example that does satisfy the requirements of
\propref{prop:BalFree}, we give an alternative definition of the Thue Morse
stream presented in the introduction:
\[
\begin{array}{rcl}
\ms &=& 0 : \fsym{c}
\\
\fsym{c} &=& 1:\fsym{f}(\fsym{c})
\\
\fsym{f}(0:\sigma) &=& 0:1:\fsym{f}(\sigma)
\\
\fsym{f}(1:\sigma) &=& 1:0:\fsym{f}(\sigma)
\end{array}
\]
This example does not fit our format of stream specifications, however unfolding
it leads to a stream specification that still satisfies the requirements of
\propref{prop:BalFree}.
After adding the rule $x : \sigma \to \ovf$, we have to show
outermost termination of the following TRS:
\[
\begin{array}{rcl}
\ms &\to& 0 : \fsym{c}
\\
\fsym{c} &\to& 1:\fsym{f}(\fsym{c})
\\
\fsym{f}(x:\sigma) &\to& \fsym{g}(x,\sigma)
\\
\fsym{g}(0,\sigma) &\to& 0:1:\fsym{f}(\sigma)
\\
\fsym{g}(1,\sigma) &\to& 1:0:\fsym{f}(\sigma)
\\
x:\sigma &\to& \ovf
\end{array}
\]
Outermost termination of the above TRS can for instance be proven using the
transformation of~\cite{RZ09} and AProVE~\cite{AProVE06} as a termination
prover, or using the approach presented in~\cite{EH09}.
This allows to conclude that the above stream specification is productive.

The next example is interesting, since it is not \emph{friendly nesting}, a
condition required by~\cite{EGH08} to be applicable. Essentially, a stream
specification is friendly nesting if the right-hand sides of every nested symbol
start with '$:$', which is clearly not the case for the second rule below.
\[
\begin{array}{rcl}
\fsym{c} &=& 1:\fsym{c}
\\
\fsym{f}(x:\sigma) &=& \fsym{g}(x,\sigma)
\\
\fsym{g}(0,\sigma) &=& 1:\fsym{f}(\sigma)
\\
\fsym{g}(1,\sigma) &=& 0:\fsym{f}(\fsym{f}(\sigma)))
\end{array}
\]
As it can be checked, the above example fits into the stream specification
format considered in this paper and it satisfies the requirements of
\propref{prop:BalFree}. After adding the rule $x : \sigma \to \ovf$, outermost
termination can be proved automatically using the above techniques, which allows
to conclude productivity of the example.

\section{Conclusions}
\label{sec:Conclusion}

We have shown that productivity of a stream specification $(\Sigma_d, \Sigma_s,
R_d, R_s)$ is equivalent to showing outermost balanced termination of $R_d \cup 
R_s \cup \{ x:\sigma \to \ovf \}$. To the best of our knowledge, this is the
first approach capable of proving productivity of stream specifications that are
not data-obliviously productive. It turns out that soundness of this technique
for proving productivity coincides with the easier direction of our equivalence:
outermost termination of the extended TRS implies productivity.

Our format of stream
specifications is more restrictive than the format of~\cite{EGH08}. However, this 
is not an essential restriction as any stream
specification in the latter format can be transformed into our format by
introducing new rules, as illustrated in~\cite{Z09} and at the end of the
introduction of this paper.

It seems that productivity has some relationship with top termination of the stream
specification. However, these notions are not equivalent. For instance, consider
the stream specification
\[
\begin{array}{rcl}
\fsym{c} &=& \fsym{f}(\fsym{c})
\\
\fsym{f}(x:\sigma) &=& \fsym{c}
\end{array}
\]
One easily shows that this system is top terminating, but $\fsym{c}$ is not
productive. We do not see how proving top termination can help for proving
productivity.

When restricting to stream specifications with $R_d = \emptyset$ and where every
left-hand side of $R_s$ contains at most one argument of type $s$, then 
balancedness is obtained for free and techniques for proving outermost
termination can be used to show productivity. An immediate topic for future work 
is hence to devise techniques for proving balanced outermost termination, which
would allow to show productivity of arbitrary stream specifications.

\bibliography{ref}

\end{document}